\newcommand{\commentout}[1]{}
\newcommand{\alert}[1]{\textbf{\color{red}
[[[#1]]]}\marginpar{\textbf{\color{red}**}}\typeout{ALERT:
\the\inputlineno: #1}}
\newcommand{\ex}{{\cal EXP}}
\newcommand{\etal}{\emph{et. al. }}
\newcommand{\mommit}[1]{}
\newcommand{\namedref}[2]{\hyperref[#2]{#1~\ref*{#2}}}
\newcommand{\sectionref}[1]{\namedref{Section}{#1}}
\newcommand{\theoremref}[1]{\namedref{Theorem}{#1}}
\newcommand{\claimref}[1]{\namedref{Claim}{#1}}
\newcommand{\lemmaref}[1]{\namedref{Lemma}{#1}}
\newcommand{\obref}[1]{\namedref{Observation}{#1}}
\newtheorem{theorem}{Theorem}
\newtheorem{lemma}{Lemma}
\newtheorem{corollary}[lemma]{Corollary}
\newtheorem{claim}[lemma]{Claim}
\newtheorem{observation}[lemma]{Observation}
\def\cE{{\cal E}}
\def\cP{{\cal P}}
\def\cG{{\cal G}}
\def\Diam{{\mathit{Diam}}}
\def\WDiam{{\mathit{WeakDiam}}}
\begin{document}

\title{Distributed Strong Diameter Network Decomposition}

\author[1]{Michael Elkin}
\author[1]{Ofer Neiman}

\affil[1]{Department of Computer Science, Ben-Gurion University of the Negev,
Beer-Sheva, Israel. Email: \texttt{\{elkinm,neimano\}@cs.bgu.ac.il}}

\date{}
\maketitle

\begin{abstract}
For a pair of positive parameters $D,\chi$, a partition $\cP$ of the vertex set $V$ of an $n$-vertex graph $G = (V,E)$ into disjoint clusters of diameter at most $D$ each is called a {\em $(D,\chi)$ network decomposition}, if the supergraph $\cG(\cP)$, obtained by contracting each of the clusters of $\cP$, can be properly $\chi$-colored.
The decomposition $\cP$ is said to be {\em strong} (resp., {\em weak}) if each of the clusters has strong (resp., weak) diameter at most $D$, i.e., if for every cluster $C \in \cP$ and every two vertices $u,v \in C$, the distance between them in the induced graph $G(C)$ of $C$ (resp., in $G$) is at most $D$.

Network decomposition is a powerful construct, very useful in distributed computing and beyond. It was introduced by Awerbuch \etal \cite{AGLP89} in the end of the eighties.
These authors showed that strong $(2^{O(\sqrt{\log n\log\log n})},2^{O(\sqrt{\log n\log\log n})})$ network decompositions can be computed in
$2^{O(\sqrt{\log n\log\log n})}$ distributed time.
Their result was improved at the beginning of nineties by Panconesi and Srinivasan \cite{PS92}, who showed that $2^{O(\sqrt{\log n\log\log n})}$ in all the three expressions can be replaced  by $2^{O(\sqrt{\log n})}$. Around the same time Linial and Saks \cite{LS93}  devised an ingenious randomized algorithm that constructs {\em weak} $(O(\log n),O(\log n))$ network decompositions in $O(\log^2 n)$ time. It was however open till now if {\em strong} network decompositions with both parameters $2^{o(\sqrt{\log n})}$ can be constructed in distributed $2^{o(\sqrt{\log n})}$ time.

In this paper we answer this long-standing open question in the affirmative, and show that strong $(O(\log n),O(\log n))$ network decompositions can be computed in $O(\log^2 n)$ time. We also present a tradeoff between parameters of our network decomposition.
Our work is inspired by and relies on the ``shifted shortest path approach", due to Blelloch \etal \cite{BGKMPT11}, and Miller \etal \cite{MPX13}. These authors developed this approach for PRAM algorithms for padded partitions. We adapt their approach to network decompositions in the distributed model of computation.

\end{abstract}

\thispagestyle{empty}
\newpage
\setcounter{page}{1}

\section{Introduction}
\subsection{Definitions and Motivation}

Consider an unweighted undirected $n$-vertex graph $G = (V,E)$, and suppose that it models a communication network. Each vertex hosts a processor with a distinct identity number from the range $\{1,\ldots,n\}$, and these processors communicate with one another via the edges of $G$ in synchronous rounds. The running time of an algorithm in this model is the number of rounds of distributed communication.

In the coloring problem one wishes to compute a proper coloring $\varphi$ of $G$ that employs a small number of colors. A coloring $\varphi$ is said to be {\em proper} if for every edge $(u,v) \in E$, we have $\varphi(u) \neq \varphi(v)$. In a seminal paper \cite{AGLP89}, Awerbuch \etal introduced a generalization of vertex coloring, in which one can cluster vertices of $G$ into clusters of small diameter. A partition $\cP$ of $G$ into disjoint clusters induces a supergraph $\cG(\cP) = (\cP,\cE)$, where
$$\cE = \{(C,C') \mid C,C' \in \cP, C \neq C', \exists (v,v') \in E \cap (C \times C')\}~.$$
A partition $\cP$ is called a {\em strong} (respectively, {\em weak}) {\em network decomposition} of $G$ with parameters $D$ and $\chi$, or shortly, {\em $(D,\chi)$ network decomposition},  if all clusters of $\cP$ have strong (resp., weak) diameter at most $D$, and the supergraph $\cG(\cP)$ can be properly colored with at most $\chi$ colors. Note that an ordinary proper $\chi$-coloring can be viewed as a $(0,\chi)$ network decomposition.

The {\em strong} (respectively, {\em weak}) {\em diameter} of a cluster $C$ is defined by $\Diam(C) = \max_{v,v' \in C} d_{G(C)}(v,v')$ (resp., $\WDiam(C) = \max_{v,v' \in C} d_G(v,v')$).
The notation $d_G$ (respectively, $d_{G(C)}$) denotes the distance function in $G$ (resp., in the induced subgraph $G(C)$ of $C$).
The strong (resp., weak) diameter of a partition $\cP$ is the maximum strong (resp., weak) diameter of its clusters.

Network decomposition is a very powerful construct in distributed computing. The original motivation of \cite{AGLP89} was symmetry breaking problems, such as maximal independent set, maximal matching and $(\Delta+1)$-vertex-coloring, where $\Delta$ is the maximum degree of the input graph.
Given a $(D,\chi)$ network decomposition $\cP$ along with a $\chi$-coloring of the induced supergraph $\cG(\cP)$, each of these problems can be solved within $O(D \cdot \chi)$ time. This is done by solving them in parallel on each of the clusters of color class 1, then extending the solution to each of the clusters of color class 2, etc.
Since clusters within each color class are at least 2 apart one from another, computations within the same color class can be conducted in parallel. Moreover, since the maximum clusters' diameter is bounded by $D$, one can perform each of these $\chi$ phases within $O(D)$ time by a naive algorithm.  (The naive algorithm collects the entire cluster's topology into a central vertex, solves the problem locally, and disseminates  the solution to all vertices of the given cluster.)

Later additional applications of network decompositions were discovered. Dubhashi \etal \cite{DMPRS05} used network decompositions for computing sparse spanners and linear-size skeletons. Barenboim \etal \cite{Bar12,BEG15} devised distributed approximation algorithm for the graph coloring and minimum dominating set problems, which employ network decompositions. Network decompositions are also closely related to {\em neighborhood covers}, which are used extensively for routing \cite{AP92} and synchronization \cite{Awe85,APPS92}.  The relationship between neighborhood covers and network decompositions was explored in \cite{ABCP92}.
Barenboim \etal \cite{BEG15} have also showed that network decompositions can be used to build low-intersecting partitions, which are, in turn, used for computing universal Steiner trees \cite{BDRRS12}.

To summarize, network decompositions have numerous applications in distributed computing and beyond. They also constitute a very appealing combinatorial  construct, well worth studying on its own right.

\subsection{Previous and Our Results}

Awerbuch \etal \cite{AGLP89} devised a deterministic algorithm with running time $2^{O(\sqrt{\log n\log\log n})}$, that computes a strong
$(2^{O(\sqrt{\log n\log\log n})}, 2^{O(\sqrt{\log n\log\log n})})$ network decomposition. This result was improved by Panconesi and Srinivasan \cite{PS92}, whose algorithm has running time $2^{O(\sqrt{\log n})}$, and both parameters of the decomposition of \cite{PS92} are $2^{O(\sqrt{\log n})}$ as well.
In another seminal work, titled ``Low Diameter Graph Decompositions",  Linial and Saks \cite{LS93} conducted a systematic investigation of network decompositions.
They showed that for any $k\le \log n$, every $n$-vertex graph admits a strong $(2k-2, 2n^{1/k}\log n)$
network decomposition, and for any $\lambda \le \log n$, it admits a strong $(2 n^{1/\lambda} \log n,\lambda)$ network decomposition, and that these bounds are nearly tight. They have also devised a randomized distributed algorithm for computing {\em weak} network decompositions in expected time $O(k\cdot n^{1/k} \cdot \log n)$, with essentially the same parameters.
In particular, and most notably, for $k = \log n$, their algorithm produces a {\em weak} $(O(\log n), O(\log n))$ network decomposition in $O(\log^2 n)$ time.

Remarkably, quarter a century after the SODA'91 publication  of Linial and Saks' paper, their algorithm is still the only algorithm whose running time is at most polylogarithmic in $n$, and which produces a network decomposition with both  parameters being at most polylogarithmic in $n$. Moreover, so far it was not known if such a result can be achieved for {\em strong} network decompositions. Linial and Saks \cite{LS93} themselves posed this as an open problem. Specifically, near the end of the introduction of \cite{LS93} they wrote:
\\
\\
{\it ``We note that we do not know how to make a similar guarantee on the strong diameter."}
\\
\\
In this paper we resolve this long-standing  open question in the affirmative. We devise a randomized algorithm with running time $O(\log^2 n)$ that computes a {\em strong} $(O(\log n), O(\log n))$ network decomposition. Moreover, similarly to Linial and Saks \cite{LS93}, we can also trade between the parameters. Specifically, for any $k\le \log n$, our randomized algorithm has running time $O(n^{1/k} \cdot k^2)$ and computes a {\em strong} $(2k-2, O(k\cdot n^{1/k}))$ network decomposition. In the other regime, for any $\lambda\le\log n$, in time $O(\lambda\cdot n^{1/\lambda} \cdot \log n)$ we compute a strong $(O( n^{1/\lambda} \log n),\lambda)$ network decomposition. Note that the number of colors and running time are slightly better than those of \cite{LS93} in the first regime. As in \cite{LS93}, all messages sent in our algorithm consist of $O(1)$ words.

The main technique that made our result possible is the ``shifted shortest path approach", due to Blelloch \etal \cite{BGKMPT11}, and Miller \etal \cite{MPX13}.
These authors  developed  this approach for  computing padded partitions in the PRAM model. Specifically, Miller \etal \cite{MPX13} devised a PRAM algorithm for computing a {\em strong padded partition}, i.e., a partition with strong diameter at most $O(\log n)/\beta$, for a parameter $\beta\le 1/2$, and such that the fraction of edges that cross between different clusters of the partition is at most $\beta$.

It is known that padded partitions are related to network decompositions.  This relationship was exploited by Bartal \cite{Bar96}, who showed that the approach of Linial and Saks \cite{LS93} for constructing network decompositions can be used to build padded partitions. In this work we exploit this relationship in the opposite direction, and show that Miller's \etal \cite{MPX13} approach for constructing padded partitions can be used for building network decompositions. Our algorithm is similar in spirit to the algorithm of \cite{LS93}, in which every vertex $v$ samples a radius $r_v$ from a geometric (or exponential, in our case) distribution, and broadcasts this to its $r_v$-neighborhood. The main difference is in determining the clusters: While in \cite{LS93} a vertex $x$ decides to join a cluster centered at $v$ if $v$ has the minimal ID among broadcasts that reached $x$, and furthermore $r_v$ is strictly larger than the distance $d(x,v)$ (this is the distance in the current graph). In our algorithm, we do not use IDs, we let $x$ compare the shifted random variables $r_v-d(x,v)$ for all vertices $v$ whose broadcast reached it, and decide according to the difference between the largest and the second largest values. This idea is inspired by \cite{MPX13}, who use a similar comparison in the {\em analysis} of their algorithm for padded partitions. However, the fact that this algorithm yields a strong diameter is somewhat more involved in our setting.

\subsection{Related Work}

Barenboim \etal \cite{BEG15} devised a randomized constant time algorithm for constructing  strong $(O(1),n^\epsilon)$ network decompositions, for an arbitrarily small constant $\epsilon>0$. Kutten \etal \cite{KNPR14} extended the algorithm of Linial and Saks \cite{LS93} for constructing  network decompositions to hypergraphs.  A long line of research developed  network decompositions  for graphs of bounded growth, see, e.g.,  \cite{GV07,KMW05,SW08}.

\section{Distributed Algorithm for Strong Diameter Network Decomposition}\label{sec:main}

Here we prove our main result. For a more accessible presentation, we first show a simpler version, and improve the number of colors in \sectionref{sec:less-color}.

\begin{theorem}\label{thm:main}
For any unweighted graph $G=(V,E)$ on $n$ vertices, and parameters $1\le k\le \ln n$, $3<c$, our randomized distributed algorithm computes, with probability at least $1-3/c$, a {\em strong} $(2k-2,(cn)^{1/k}\cdot\ln (cn))$ network decomposition of $G$. The number of rounds required is $k(cn)^{1/k}\cdot\ln (cn)$, and each message consists of $O(1)$ words.
\end{theorem}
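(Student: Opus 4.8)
The plan is to adapt the "shifted shortest path" / exponential-shift construction to produce the clusters, and then verify the three properties: strong diameter $\le 2k-2$, a proper coloring of $\cG(\cP)$ with $(cn)^{1/k}\ln(cn)$ colors, and success with probability $\ge 1-3/c$. First I would have every vertex $v$ draw an independent random shift $\delta_v$ from an exponential distribution with parameter $\lambda = \ln(cn)/k$ (truncated so that $\delta_v < k$ with high probability — this is where the factor $c$ and the failure term enters). Think of $\delta_v$ as the radius to which $v$ broadcasts. Each vertex $x$ collects, over $\max_v \delta_v = O(k(cn)^{1/k})$ rounds, the value $f_v(x) := \delta_v - d_G(x,v)$ for every $v$ whose broadcast reached it (i.e., with $d_G(x,v) < \delta_v$), and assigns $x$ to the cluster $C_{v^*}$ of the vertex $v^*$ maximizing $f_v(x)$, breaking ties by ID. This is exactly the Miller--Peng--Xu rule, except it is used here to \emph{define} the partition, not just to analyze padding.

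The first key step is showing each cluster $C_v$ has \emph{strong} diameter $\le 2k-2$. The natural claim is that $C_v$ is "star-shaped around $v$": if $x \in C_v$ then every vertex on a shortest $x$–$v$ path in $G$ also lies in $C_v$. To see this, let $y$ lie on a shortest $x$–$v$ path, so $d(x,v) = d(x,y) + d(y,v)$. For any competitor $u$, $d(y,u) \ge d(x,u) - d(x,y)$ by the triangle inequality, hence $f_u(y) = \delta_u - d(y,u) \le \delta_u - d(x,u) + d(x,y) = f_u(x) + d(x,y)$, while $f_v(y) = \delta_v - d(y,v) = \delta_v - d(x,v) + d(x,y) = f_v(x) + d(x,y)$. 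Since $f_v(x) \ge f_u(x)$ (with the tie-break), we get $f_v(y) \ge f_u(y)$ with the same tie-break, so $y \in C_v$. Consequently, the shortest $x$–$v$ path stays inside $C_v$, so $d_{G(C_v)}(x,v) = d_G(x,v) \le \delta_v < k$, i.e. $\le k-1$ since distances are integers; triangle inequality inside $C_v$ then gives $\Diam(C_v) \le 2(k-1) = 2k-2$. (One must also handle empty clusters and the fact that a vertex that is never reached by any broadcast forms its own singleton cluster — but $\delta_v > 0$ always, so $v$ reaches at least itself.)

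The second key step is the coloring of $\cG(\cP)$. I would argue that the supergraph has small \emph{chromatic number} because it has small maximum degree after a standard argument — or, more in the spirit of Linial--Saks, that one can $\chi$-color it greedily using the cluster centers' IDs once we bound the number of clusters that can be "close." The cleaner route: show that the number of distinct clusters is at most $(cn)^{1/k}\ln(cn)$ in expectation via a ball-growing/probabilistic argument on the shifts — a vertex $v$ "wins" (its cluster is nonempty and it is the center) only if its shift beats all others in a ball, and the truncated-exponential tail makes the expected number of winners $O(n \cdot e^{-\lambda \cdot (\text{typical gap})})$; tuning $\lambda = \ln(cn)/k$ makes this $(cn)^{1/k}\ln(cn)$. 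Then color $\cG(\cP)$ trivially with one color per cluster. The probability bookkeeping splits the $3/c$ failure budget three ways: $1/c$ for some $\delta_v \ge k$ (truncation event), $1/c$ for the number of clusters exceeding the bound (Markov on the expected count), and $1/c$ slack for the coloring/running-time guarantee; a union bound finishes it.

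The main obstacle I expect is the second step — pinning down the number of colors as $(cn)^{1/k}\ln(cn)$ rather than merely bounding it by something like $n^{1/k}\cdot\poly$. Unlike in Linial--Saks, where the recursive $k$-phase structure makes the count transparent, here everything happens in one shot, so one must carefully analyze $\Pr[v \text{ is a center}]$ using the memorylessness of the exponential distribution and a charging argument over shortest-path balls — essentially showing that within any radius-$O(k)$ ball, only $O((cn)^{1/k}\ln(cn))$ vertices are expected to win. The interplay between truncation (needed for the $2k-2$ strong-diameter bound) and the tail estimate (needed for the color count) is the delicate point, and I would expect the bulk of the proof's length to go there; the strong-diameter argument above, by contrast, is short and essentially combinatorial.
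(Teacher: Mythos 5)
Your strong-diameter argument for the one-shot shifted partition is fine as far as it goes (star-shapedness of the clusters plus a cap on the shifts), but the coloring step contains a genuine gap that the rest of the proposal cannot recover from. You propose to bound the \emph{number of clusters} by $(cn)^{1/k}\ln(cn)$ and then give each cluster its own color. That count is false: in a one-shot partition where every vertex joins the maximizer of $\delta_v-d(x,v)$, the number of nonempty clusters is typically $\Theta(\beta n)$ --- already on a path with $\beta=\ln(cn)/k$ and $k=\ln n$ you get $\Omega(n/\log n)$ clusters, while the target is $O(\log n)$ colors. A network decomposition does not need few clusters; it needs the supergraph $\cG(\cP)$ to have small chromatic number, and a one-shot partition by itself gives you no handle on that.

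The paper's construction is iterative, and the iteration is exactly what produces the color bound. There are $\lambda=(cn)^{1/k}\ln(cn)$ phases; in phase $t$ every surviving vertex resamples $r_v\sim\ex(\beta)$, and a vertex $y$ joins the block $W_t$ iff the largest value of $r_v-d_{G_t}(y,v)$ exceeds the second largest by \emph{more than} $1$ --- not merely iff it is the argmax. The connected components of $G(W_t)$ are pairwise nonadjacent, so each block consumes a single color, and the number of colors equals the number of phases. This additive-gap rule is the missing idea on your side: it forces any two adjacent vertices of $W_t$ to have chosen the same center (so each component of a block is star-shaped around one center, giving strong diameter $2k-2$), and Lemma 4.4 of Miller et al.\ shows the gap event occurs with probability at least $e^{-\beta}=(cn)^{-1/k}$ per vertex per phase, so $\lambda$ phases exhaust the graph with probability $1-1/c$. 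Your failure-budget split ($1/c$ for truncation, $1/c$ for the color count, $1/c$ slack) matches the paper's in spirit, but the middle term must come from this per-phase survival bound, not from a Markov bound on a cluster count that is not actually small.
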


Note that taking $c=2^k$ does not affect the number of blocks and rounds by more than a constant factor.
Following \cite{LS93}, we form the partition by carving blocks. A {\em block} $W\subseteq V$ is set of vertices, and the connected components of $G(W)$ are clusters. Clearly, these clusters form an independent set in ${\cal G}({\cal P})$, and thus can be colored with a single color. So the chromatic number of ${\cal G}({\cal P})$ is bounded by the number of blocks our algorithm generates.

\paragraph{Construction.}
The algorithm is a subtle modification of the \cite{LS93} algorithm, inspired by the recent methods of \cite{MPX13}. Let $\beta=\ln (cn)/k$. The algorithm consists of phases $t=1,2,\ldots,\lambda$, for $\lambda=(cn)^{1/k}\cdot\ln (cn)$. Let $G_1=G$. In each phase $t$ we carve a block $W_t$ out of the current graph $G_t$, and let $G_{t+1}=G_t\setminus W_t$.

To implement the $t$-th phase, every vertex $v\in V(G_t)$ chooses independently in parallel a value $r_v^{(t)}$ (we shall omit the superscript whenever it is clear from context), by sampling from the exponential distribution with parameter $\beta$, denoted $\ex(\beta)$, which has density
\[
f(x)=\left\{\begin{array}{ccc} \beta\cdot e^{-\beta x} & x\ge 0\\
0 & \text{otherwise.} \end{array} \right.
\]

For $v\in V$, let ${\cal E}_v$ be the event that at some phase $t$, $r_v^{(t)}\ge k+1$. We will later prove the following lemma.
\begin{lemma}\label{lem:radius}
With probability at least $1-2/c$, none of the events ${\cal E}_v$ hold.
\end{lemma}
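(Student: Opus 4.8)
Here is the plan I would follow.

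\paragraph{Reduction to a union bound.} The event $\bigcup_{v\in V}\mathcal{E}_v$ holds precisely when some vertex $v$ is still present at some phase $t\le\lambda$ (that is, $v\in V(G_t)$) and at that phase samples $r_v^{(t)}\ge k+1$. The first thing to record is the relevant independence: whether $v\in V(G_t)$ is a function only of the radii chosen in phases $1,\dots,t-1$ (these determine $W_1,\dots,W_{t-1}$ and hence $G_t$), so it is independent of the fresh sample $r_v^{(t)}$. Therefore
\[
\Pr\Big[\bigcup_{v\in V}\mathcal{E}_v\Big]\;\le\;\sum_{v\in V}\sum_{t=1}^{\lambda}\Pr[v\in V(G_t)]\cdot\Pr\big[r_v^{(t)}\ge k+1\big]\;=\;\Pr_{X\sim\ex(\beta)}[X\ge k+1]\cdot\sum_{t=1}^{\lambda}\E\big[|V(G_t)|\big].
\]

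\paragraph{The two factors.} The first factor is an immediate tail computation: since $\beta=\ln(cn)/k$, we have $\Pr_{X\sim\ex(\beta)}[X\ge k+1]=e^{-\beta(k+1)}=(cn)^{-(k+1)/k}=(cn)^{-1/k}/(cn)$. For the second factor I would argue that the active set shrinks geometrically: conditioned on the state at the start of phase $t$, every vertex still present is carved into $W_t$ with probability at least $(cn)^{-1/k}=e^{-\beta}$ (up to a constant), so $\E|V(G_{t+1})|$ is a $\big(1-\Omega((cn)^{-1/k})\big)$ fraction of $\E|V(G_t)|$, giving $\E|V(G_t)|\le n\big(1-\Omega((cn)^{-1/k})\big)^{t-1}$; summing the geometric series over $t$ yields $\sum_{t\ge1}\E|V(G_t)|=O\big(n(cn)^{1/k}\big)$. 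Multiplying the two factors gives $\Pr\big[\bigcup_v\mathcal{E}_v\big]=O\big(n\cdot(cn)^{-1}\big)=O(1/c)$, and a careful accounting of the constant (with carving probability exactly $(cn)^{-1/k}$ the product is $1/c$) gives the stated bound $2/c$.

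\paragraph{Where the work is.} Every step above is routine except the per-phase carving estimate feeding the geometric bound, and that is where I expect the real obstacle to lie. It is a statement purely about a single phase's decision rule: a present vertex $x$ fails to be carved only when it is a ``boundary'' vertex, i.e.\ the largest and second-largest of the shifted quantities $r_u^{(t)}-d_{G_t}(x,u)$ lie within the comparison threshold of each other (or the largest is itself below the threshold), and one has to show the complementary event has probability at least $e^{-\Theta(\beta)}=(cn)^{-\Theta(1/k)}$. This follows from the memorylessness of $\ex(\beta)$, but it requires the precise threshold used to compare the top two shifted radii — fixed later in the construction — and it is exactly the same estimate needed to argue that $\lambda$ phases suffice to carve all of $V$; as the authors note, extracting a genuine strong-diameter guarantee from this single-phase comparison is the delicate point. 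Granting that estimate, the lemma is just the union bound, one exponential-tail evaluation, and a geometric series.
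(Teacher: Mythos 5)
Your proposal is correct and follows essentially the same route as the paper: a union bound over vertices and phases, the exponential tail $\Pr[r_v^{(t)}\ge k+1]=e^{-\beta(k+1)}=(cn)^{-1-1/k}$ combined via independence of the fresh sample from the event $\{v\in G_t\}$, and the geometric decay of survival probabilities (the paper's Claim~\ref{claim:ref}, itself resting on Lemma~\ref{lem:prob}) to bound $\sum_t\Pr[v\in G_t]\le (cn)^{1/k}$. The only differences are cosmetic — you aggregate into $\sum_t\E|V(G_t)|$ and sum the geometric series exactly, whereas the paper groups the phases into $\ln(cn)$ blocks of length $(cn)^{1/k}$, which costs it a factor of $2$ — and your per-phase carving estimate (gap of more than $1$ between the top two shifted values, succeeding with probability at least $e^{-\beta}$) is exactly the one already established in Claim~\ref{claim:ref}.
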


Every vertex $v$ will broadcast the value $r_v$ to every vertex of $G_t$ within distance $R_v:=\lfloor r_v\rfloor$ from it. Note that assuming \lemmaref{lem:radius}, $R_v\le k$. 
Each vertex $y$ in $G_t$ records the values of $r_v$ for vertices $v$ whose broadcast reached $y$, and also the distances in $G_t$ to these vertices. Then $y$ orders these vertices $v_1,\dots,v_s$ in non-increasing order according to $m_i=r_{v_i}-d_{G_t}(y,v_i)$.
We declare that $y$ joins $W_t$ iff $m_1-m_2>1$.
Observe that all $m_i$ are nonnegative, since $y$ will hear the broadcast of $v_i$ only if $d_{G_t}(y,v_i)\le R_{v_i}$, the latter is at most $r_{v_i}$.
If $s=1$, i.e. there is no second broadcast that reached $y$, define $m_2=0$ (observe $m_1$ is well defined as $y$ also broadcasts). If indeed $y$ joins $W_t$, then we say that $y$ {\em chose} the center $v_1$.

We begin by analyzing the strong diameter of the blocks.
\begin{observation}\label{ob:ser}
If $y$ chose $v_1$ as a center at phase $t$, then $d_{G_t}(v_1,y) <r_{v_1}-1$.
\end{observation}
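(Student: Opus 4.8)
The plan is simply to unwind the definition of ``$y$ joins $W_t$ and chooses $v_1$.'' By construction this means exactly that, ordering the heard centers so that $m_1\ge m_2\ge\cdots\ge m_s$ with $m_i=r_{v_i}-d_{G_t}(y,v_i)$ (and $m_2:=0$ when $s=1$), we have $m_1-m_2>1$. The target inequality $d_{G_t}(v_1,y)<r_{v_1}-1$ is, after substituting $d_{G_t}(v_1,y)=r_{v_1}-m_1$ and rearranging, equivalent to $m_1>1$. So the whole task reduces to lower-bounding $m_1$.

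First I would record that $m_2\ge 0$: as already observed in the text, every $m_i$ is nonnegative, because $y$ hears the broadcast of $v_i$ only when $d_{G_t}(y,v_i)\le R_{v_i}=\lfloor r_{v_i}\rfloor\le r_{v_i}$; and in the degenerate case $s=1$ we have $m_2=0$ by definition. Combining $m_2\ge 0$ with the carving rule $m_1-m_2>1$ immediately gives $m_1>1$. Then $d_{G_t}(v_1,y)=r_{v_1}-m_1<r_{v_1}-1$, which is the claim.

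I do not expect any genuine obstacle: the observation is an immediate consequence of the carving criterion together with nonnegativity of the shifted radii. The only point worth a sentence of care is that $m_1$ (hence the comparison $m_1-m_2>1$) is always well defined, even when $s=1$ — this holds because $y$ broadcasts its own value $r_y$ and $d_{G_t}(y,y)=0\le R_y$, so $y$ is always among the centers it hears from, giving $m_1\ge r_y\ge 0$.
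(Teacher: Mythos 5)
Your proof is correct and is essentially the paper's argument: the paper runs the same computation in contrapositive form (if $d_{G_t}(v_1,y)\ge r_{v_1}-1$ then $m_1\le 1$, hence $m_1-m_2\le 1$, contradicting the carving rule), which is exactly your observation that $m_2\ge 0$ and $m_1-m_2>1$ force $m_1>1$.
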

\begin{proof}
If $d_{G_t}(v_1,y)\ge r_{v_1}-1$, then $m_1\le 1$, which implies that $m_1-m_2\le 1$, contradicting the fact that $y$ joins $W_t$.
\end{proof}

\begin{claim}\label{claim:strong}
If a vertex $y\in V(G_t)$ chose $v$ at phase $t$, then every vertex $x$ on the shortest-path from $v$ to $y$ in $G_t$ must have chosen $v$ at phase $t$ as well.
\end{claim}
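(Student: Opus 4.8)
The goal is to show that if $y$ chose center $v$ at phase $t$, then every vertex $x$ on a shortest $v$–$y$ path in $G_t$ also chose $v$. The key point is that $x$ must have heard $v$'s broadcast (because $x$ is closer to $v$ than $y$ is, and $y$ heard it), so $v$ is a candidate center for $x$; I then need to show $x$ prefers $v$ by a margin strictly greater than $1$.

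First I would set up notation: let $d = d_{G_t}(v,y)$ and $a = d_{G_t}(v,x)$, so $0 \le a \le d$, and since $x$ lies on a shortest path, $d_{G_t}(x,y) = d - a$. Since $y$ heard $v$, we have $d \le R_v \le r_v$, hence $a \le d \le R_v$, so $x$ indeed hears $v$'s broadcast; thus $v$ is among $x$'s candidates, with $x$'s value for $v$ being $m_v^x = r_v - a$. Let $m_1^y \ge m_2^y$ be the top two values at $y$, with $m_1^y = r_v - d$ (the center $y$ chose) and $m_1^y - m_2^y > 1$ by hypothesis. Now take any other candidate $u$ at $x$ (so $x$ heard $u$), with value $m_u^x = r_u - d_{G_t}(x,u)$; I want $m_v^x - m_u^x > 1$, i.e. $(r_v - a) - (r_u - d_{G_t}(x,u)) > 1$.

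The main step is a triangle-inequality / transfer-of-broadcast argument: I would argue that $u$'s broadcast also reaches $y$, or at least that $y$ has a candidate with value at least $m_u^x - (d-a)$. Indeed $d_{G_t}(y,u) \le d_{G_t}(y,x) + d_{G_t}(x,u) = (d-a) + d_{G_t}(x,u)$, and since $x$ heard $u$ we have $d_{G_t}(x,u) \le R_u \le r_u$; but this only bounds $d_{G_t}(y,u)$ by $(d-a) + r_u$, which need not be $\le R_u$, so $y$ need not hear $u$ directly. The cleaner route: consider the vertex $w$ on a shortest $u$–$x$ path at distance exactly $\min(R_u, d_{G_t}(u,x) + (d-a))$... — actually the slickest argument is to note $y$ heard its \emph{second} center $v_2^y$, whatever it is, and compare. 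Let me instead bound $m_u^x$ directly: whatever $u$ is, $y$'s value for "the best center it can reach through $x$'s information" must be at least $r_u - d_{G_t}(y,u) \ge r_u - d_{G_t}(x,u) - (d-a) = m_u^x - (d-a)$ \emph{provided} $y$ actually hears $u$; if $y$ does not hear $u$, then $d_{G_t}(x,u) \ge R_u - (d - a) \ge r_u - 1 - (d-a)$, whence $m_u^x = r_u - d_{G_t}(x,u) \le 1 + (d-a) \le 1 + (r_v - d) $... This is where I expect the real work: handling the two cases (does $y$ hear $u$ or not) and in each case producing the bound $m_u^x \le m_v^x - 1$.

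Assembling: in the case $y$ hears $u$, we get a value $r_u - d_{G_t}(y,u) \le m_2^y < m_1^y - 1 = (r_v - d) - 1$, and combined with $d_{G_t}(y,u) \le d_{G_t}(x,u) + (d-a)$ this yields $m_u^x = r_u - d_{G_t}(x,u) \le (r_v - d) - 1 + (d - a) = (r_v - a) - 1 = m_v^x - 1 < m_v^x$; in the case $y$ does not hear $u$, the broadcast-radius bound $d_{G_t}(x,u) > R_u - (d-a) \ge r_u - 1 - (d-a)$ gives $m_u^x < 1 + (d-a) \le m_v^x$ using $m_v^x - 1 = (r_v-a) - 1 \ge (d - a) + (m_1^y - 1 - ...) $ — I will need to verify that $(r_v - a) - 1 \ge (d - a)$, i.e. $r_v - d \ge 1$, i.e. $m_1^y \ge 1$, which holds since $m_1^y > m_2^y + 1 \ge 1$ as all $m_i \ge 0$. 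Hence for every candidate $u \ne v$ at $x$ we get $m_v^x - m_u^x \ge 1$, and I still need the strict inequality "$> 1$" rather than "$\ge 1$"; I expect this to follow by tracking where the inequalities are strict (Observation \ref{ob:ser} gives $d < r_v - 1$ strictly, which should give the needed slack), or by a careful re-examination — pinning down this strictness is the secondary obstacle. Once done, $m_1^x - m_2^x > 1$ with $v$ attaining $m_1^x$, so $x$ joins $W_t$ and chooses $v$, completing the claim; induction along the path then gives it for all intermediate vertices.
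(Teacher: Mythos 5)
Your overall strategy is sound and is essentially the paper's argument run forwards instead of by contradiction: the paper assumes $x$ did not choose $v$, extracts a competitor $v'$ with $r_{v'}-d_{G_t}(x,v')\ge r_v-d_{G_t}(x,v)-1$, and shows via the triangle inequality and Observation~\ref{ob:ser} that such a $v'$ must be heard by $y$ with a value within $1$ of $v$'s value at $y$ --- contradicting $y$'s choice. Your Case~1 ($y$ hears $u$) is exactly this computation and is correct, including the strictness: $m_u^x \le (r_u - d_{G_t}(y,u)) + (d-a) \le m_2^y + (d-a) < (r_v-d)-1+(d-a) = m_v^x - 1$.

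The genuine gap is in your Case~2, and it is not merely a matter of ``tracking strictness.'' From $d_{G_t}(y,u) > R_u$ you derive $d_{G_t}(x,u) > R_u-(d-a) \ge r_u-1-(d-a)$, hence $m_u^x < 1+(d-a)$, and therefore $m_v^x - m_u^x > r_v - d - 1$. Observation~\ref{ob:ser} gives only $r_v - d > 1$, so this chain proves $m_v^x - m_u^x > 0$, a full unit short of the required $m_v^x - m_u^x > 1$. The missing idea is integrality of distances: ``$y$ does not hear $u$'' means $d_{G_t}(y,u) \ge R_u + 1 = \lfloor r_u\rfloor + 1 > r_u$ (not just $> R_u \ge r_u - 1$). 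This recovers the lost unit: $d_{G_t}(x,u) \ge d_{G_t}(y,u)-(d-a) > r_u-(d-a)$, so $m_u^x < d-a$, and then $m_v^x - m_u^x > (r_v-a)-(d-a) = r_v-d > 1$ by Observation~\ref{ob:ser}. With that repair both cases close and the claim follows (no induction along the path is needed --- the argument applies to each $x$ directly). Note that the paper's contrapositive formulation sidesteps Case~2 entirely: it shows that any competitor within $1$ at $x$ is forced, by the same inequalities, to lie within broadcast range of $y$, so the ``unheard'' case never arises for the competitors that matter.
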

\begin{proof}
Since $d_{G_t}(v,x)\le d_{G_t}(v,y)$, the broadcast of $v$ at phase $t$ must have reached $x$ as well, so $x$ records the value $m=r_v-d_{G_t}(x,v)$. Seeking contradiction, assume $x$ did not choose $v$, then there exists $v'$ for which $x$ records the value $m'=r_{v'}-d_{G_t}(x,v')$ with $m'\ge m-1$ (if there is no such $v'$, then $x$ would have joined $W_t$ with $v$ as center). In particular,
\begin{equation}\label{eq:1}
d_{G_t}(x,v')\le r_{v'}-r_v+d_{G_t}(x,v)+1~.
\end{equation}
It follows that
\begin{eqnarray}\label{eq:arra}
d_{G_t}(y,v')&\le& d_{G_t}(y,x)+d_{G_t}(x,v')\nonumber\\
&\stackrel{\eqref{eq:1}}{\le}&d_{G_t}(y,x)+r_{v'}-r_v+d_{G_t}(x,v)+1\nonumber\\
&=&(d_{G_t}(y,v)-r_v+1)+r_{v'}\\
&<& r_{v'}
\end{eqnarray}
where the last inequality uses \obref{ob:ser}. Thus $d_{G_t}(y,v')\le R_{v'}$, so the broadcast of $v'$ will reach $y$, and $y$ will record a corresponding value of
\[
r_{v'}-d_{G_t}(y,v')\stackrel{\eqref{eq:arra}}{\ge} r_v-d_{G_t}(y,v)-1~,
\]
that is, it is within 1 of the value $y$ stored for $v$,
which contradicts the fact that $y$ chose $v$.
\end{proof}

\begin{lemma}\label{lem:diam}
For every $1\le t\le \lambda$, the block $W_t$ has strong diameter at most $2k-2$.
\end{lemma}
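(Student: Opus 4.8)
The plan is to show that each cluster of $W_t$ (i.e.\ each connected component of the induced graph $G(W_t)$) has a single common center, and then to route through that center. The two main tools are already available: \obref{ob:ser} controls how far a vertex sits from its center, and \claimref{claim:strong} guarantees that a whole shortest path from a center to any vertex that chose it stays inside $W_t$ (and, being connected, inside the same cluster). The only genuinely new step is a ``locally constant center'' claim.

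First I would prove that if two vertices $x,y\in W_t$ are adjacent in $G_t$ (equivalently, adjacent in $G(W_t)$, since $G(W_t)$ is induced), then they chose the same center at phase $t$. Suppose not, say $x$ chose $v$ and $y$ chose $v'\neq v$. Since $x$ chose $v$, \obref{ob:ser} gives $d_{G_t}(x,v)<r_v-1$, and because distances are integers this forces $d_{G_t}(x,v)+1\le R_v=\lfloor r_v\rfloor$; as $y$ is a neighbour of $x$, we get $d_{G_t}(y,v)\le d_{G_t}(x,v)+1\le R_v$, so the broadcast of $v$ reaches $y$. Hence $y$ records for the center $v$ the value $r_v-d_{G_t}(y,v)\ge r_v-d_{G_t}(x,v)-1=m_1-1$, where $m_1=r_v-d_{G_t}(x,v)$ is the largest value recorded by $x$ (it is largest because $x$ chose $v$). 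Since $v\neq v'$ and $v'$ is $y$'s top choice, the value $y$ records for $v$ is at most $y$'s second-largest recorded value $m_2$ (which is a genuine recorded value here, as $y$ records at least the two distinct centers $v',v$); but $y$ joined $W_t$, so $y$'s gap exceeds $1$, giving $m_1^{(y)}>m_2^{(y)}+1\ge m_1^{(x)}$. Running the same argument with the roles of $x,y$ and $v,v'$ swapped yields $m_1^{(x)}>m_1^{(y)}$, a contradiction. Therefore neighbours inside a block agree on their center, and by connectivity every cluster $C$ of $G(W_t)$ has one common center $v$.

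Next I would check that this common center $v$ actually lies in $C$. Pick any $y\in C$; it chose $v$, so by \claimref{claim:strong} every vertex on a shortest $v$--$y$ path in $G_t$ also chose $v$, hence lies in $W_t$; thus $v\in W_t$, and that path places $v$ in the same connected component of $G(W_t)$ as $y$, so $v\in C$ and the path stays inside $C$. Finally I would bound distances through $v$: for any $x\in C$, \obref{ob:ser} gives $d_{G_t}(v,x)<r_v-1$, and since we work under the conclusion of \lemmaref{lem:radius} (no event ${\cal E}_v$ occurs) we have $r_v<k+1$, whence $d_{G_t}(v,x)<k$, i.e.\ $d_{G_t}(v,x)\le k-1$. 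By \claimref{claim:strong} a shortest $v$--$x$ path in $G_t$ lies inside $W_t$, hence inside $C$, so $d_{G(C)}(v,x)\le k-1$. The triangle inequality in $G(C)$ then gives $d_{G(C)}(x,y)\le d_{G(C)}(x,v)+d_{G(C)}(v,y)\le 2(k-1)=2k-2$ for all $x,y\in C$; since $G(C)$ is the induced graph of the cluster, this is the strong diameter bound, and it holds for every cluster of $W_t$.

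The step I expect to be the main obstacle is the ``neighbours agree on the center'' claim, and within it the only delicate point is the interaction between the real-valued radii $r_v$ and the integer broadcast radii $R_v=\lfloor r_v\rfloor$: one must be sure that the strict inequality $d_{G_t}(x,v)<r_v-1$ from \obref{ob:ser} really does yield $d_{G_t}(x,v)+1\le R_v$, so that the broadcast of $v$ still reaches a neighbour $y$ of $x$. Everything else is triangle inequality bookkeeping on top of \obref{ob:ser} and \claimref{claim:strong}.
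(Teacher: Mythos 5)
Your proposal is correct and follows essentially the same route as the paper: show that adjacent vertices of a block must share a center (so each cluster has a unique center), then bound the diameter through that center via \obref{ob:ser} and \claimref{claim:strong}. The only cosmetic difference is that you derive the contradiction symmetrically ($m_1^{(y)}>m_1^{(x)}$ and $m_1^{(x)}>m_1^{(y)}$) where the paper fixes WLOG which vertex recorded the larger value, and you make explicit the (correct) step that the common center itself lies in the cluster.
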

\begin{proof}
Fix any cluster $C$ which is a connected component of $G(W_t)$.
We first argue that if all vertices in $C$ chose the same center $v$, then its strong diameter is at most $2k-2$. To see this, note that by \obref{ob:ser} all vertices $y\in C$ are within $r_v-1$ distance from $v$, since the graph is unweighted, this is at most $R_v-1\le k-1$ (assuming the event of \lemmaref{lem:radius} holds). By \claimref{claim:strong}, every vertex on the shortest-path from $v$ to $y$ (in $G_t$) is also included in $C$, so the strong diameter is at most $2k-2$.

Consider now the case that there are two vertices $y,z\in C$ that chose different centers $v,u$. We will show that this assumption must lead to a contradiction. Note we may assume that $y,z$ are adjacent, since for any two non-adjacent $y',z'$ who chose different centers, we can simply walk on the path in $C$ (which is connected) from $y'$ to $z'$ until we find adjacent vertices with a center change occurring.
W.l.o.g assume $y$ is the vertex which recorded the larger value, that is,
\begin{equation}\label{eq:ffd}
r_v-d_{G_t}(y,v)\ge r_u-d_{G_t}(z,u)~.
\end{equation}
By the triangle inequality and \obref{ob:ser} we see that $d_{G_t}(z,v)\le d_{G_t}(y,v)+1< r_v$, which implies $d_{G_t}(z,v)\le R_v$, so that the broadcast of $v$ will reach $z$. The value $z$ obtains from $v$ is
\[
r_v-d_{G_t}(z,v)\ge r_v-(d_{G_t}(y,v)+1)\stackrel{\eqref{eq:ffd}}{\ge} r_u-d_{G_t}(z,u)-1~,
\]
which contradicts the assumption that $z$ chose $u$.
\end{proof}

We next show that $\lambda$ phases suffice to exhaust the graph, which gives this bound on the number of blocks.
To this end, we use the following result from \cite[Lemma 4.4]{MPX13} on the order statistics of shifted exponential random variables.
\begin{lemma}[\cite{MPX13}]\label{lem:prob}
Let $d_1\le\ldots\le d_q$ be arbitrary values and let $\delta_1,\dots,\delta_q$ be independent random variables picked from $\ex(\beta)$. Then the probability that the largest and the second largest values of $\delta_j-d_j$ are within 1 of each other is at most $1-e^{-\beta}$.~\footnote{We state here a special case of their result. The assertion in \cite{MPX13} gives the bound $O(\beta)$, but their proof in fact yields the stronger bound given here.}
\end{lemma}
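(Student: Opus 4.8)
The plan is to prove the equivalent statement that $\Pr[\,\text{gap} > 1\,] \ge e^{-\beta}$, where $\text{gap}$ denotes the difference between the largest and the second largest of the shifted variables $Y_j := \delta_j - d_j$ (I assume $q \ge 2$, so that a ``second largest'' is defined; note also that since the $\delta_j$ are i.i.d., the ordering $d_1 \le \cdots \le d_q$ is immaterial, reflecting only the form in which these quantities arise in the application). Since the $\delta_j$ are continuous, almost surely no two $Y_j$ coincide, so the maximizer is a.s.\ unique; denote it $j^*$ and let $\tau := \max_{j \ne j^*} Y_j$ be the runner-up value, so that $\text{gap} = Y_{j^*} - \tau$ and the ``within $1$'' event is exactly $\{Y_{j^*} - \tau \le 1\}$.

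The heart of the argument is to condition on the pair $(j^*,\tau)$ and invoke the memorylessness of the exponential distribution. Fix a realization with $j^* = i$ and $\tau = s$. The conditioning event is $\{Y_i > \max_{j \ne i} Y_j\} \cap \{\max_{j \ne i} Y_j = s\} = \{Y_i > s\} \cap \{\max_{j \ne i} Y_j = s\}$; since $Y_i = \delta_i - d_i$ is independent of $(Y_j)_{j \ne i}$, the conditional law of $Y_i$ is exactly that of $Y_i$ given $\{Y_i > s\}$, i.e.\ of $\delta_i$ given $\{\delta_i > s + d_i\}$. (The event $\{\max_{j \ne i} Y_j = s\}$ has probability zero, so this step is made precise via conditional densities, using the independence just noted.) If $s + d_i \ge 0$, memorylessness gives that $\delta_i - (s+d_i)$, conditioned on $\delta_i > s + d_i$, is again distributed as $\ex(\beta)$; since $\text{gap} = Y_i - s = \delta_i - (s+d_i)$ under this conditioning, $\Pr[\,\text{gap} > 1 \mid j^* = i,\ \tau = s\,] = e^{-\beta}$. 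If instead $s + d_i < 0$, then $\{\delta_i > s + d_i\}$ holds automatically (as $\delta_i \ge 0$), so the conditioning is vacuous and the conditional probability equals $\Pr[\delta_i > 1 + s + d_i]$; since $1 + s + d_i < 1$, this is $1$ when $1 + s + d_i \le 0$ and $e^{-\beta(1+s+d_i)} > e^{-\beta}$ otherwise. Thus in every case the conditional probability is at least $e^{-\beta}$.

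Taking the expectation over $(j^*,\tau)$ then yields $\Pr[\,\text{gap} > 1\,] = \E\big[\,\Pr[\,\text{gap} > 1 \mid j^*,\tau\,]\,\big] \ge e^{-\beta}$, and therefore $\Pr[\,\text{gap} \le 1\,] \le 1 - e^{-\beta}$, which is the assertion of the lemma. The only delicate points are the almost-sure uniqueness of the maximizer (so that $j^*$ and $\tau$ are well defined), the conditioning on the measure-zero event $\{\tau = s\}$ (handled in the standard way via densities together with the independence of $Y_i$ from the other coordinates), and the ``below the support'' case $s + d_i < 0$, where memorylessness does not literally apply but the bound only improves. I expect the measure-theoretic bookkeeping around the conditioning to be the fussiest part of a full write-up, though it is entirely routine; the genuine content is the one-line observation that, conditioned on which index wins and on the runner-up's value, the winner's margin over the rest of the field stochastically dominates a fresh $\ex(\beta)$ variable.
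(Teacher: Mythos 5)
Your proof is correct. The paper does not prove this lemma itself---it cites Lemma 4.4 of \cite{MPX13}---and your argument (condition on the identity of the maximizer and on the runner-up value $\tau$, then use memorylessness of the exponential to conclude that the winner's margin is a fresh $\ex(\beta)$ variable, handling the degenerate case $\tau+d_i<0$ separately) is essentially the argument behind that lemma, and it indeed yields the sharp bound $1-e^{-\beta}$ claimed in the footnote rather than the $O(\beta)$ stated in \cite{MPX13}.
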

We use this result to prove the following:
\begin{claim}\label{claim:ref}
For any $y\in V$, and $1\le t'\le \lambda$,
\[
\Pr[y\in G_{t'+1}]\le (1-(cn)^{-1/k})^{t'}~.
\]
\end{claim}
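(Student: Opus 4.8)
The plan is to prove Claim~\ref{claim:ref} by induction on $t'$, showing that in each phase, conditioned on $y$ still being present, $y$ gets carved into the current block with probability at least $(cn)^{-1/k}$. The base case $t'=0$ is trivial since the bound is $1$. For the inductive step, I would condition on the event that $y \in V(G_{t'})$, and bound from below the probability that $y$ joins $W_{t'}$.

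\medskip

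\textbf{Main step.} Fix a phase $t$ and suppose $y \in V(G_t)$. Recall that $y$ joins $W_t$ iff $m_1 - m_2 > 1$, where $m_1 \ge m_2 \ge \cdots$ are the ordered values of $r_{v_i} - d_{G_t}(y, v_i)$ over all vertices $v_i$ whose broadcast reached $y$. The subtlety is that the \emph{set} of vertices whose broadcast reaches $y$ is itself random --- it depends on the radii $R_{v} = \lfloor r_v \rfloor$. To apply Lemma~\ref{lem:prob}, I would instead consider \emph{all} vertices $v \in V(G_t)$ at once: let $d_v := d_{G_t}(y,v)$ for each such $v$, and let $\delta_v := r_v^{(t)}$, which are i.i.d.\ $\ex(\beta)$. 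The key observation is that whether or not $v$'s broadcast actually reaches $y$, the quantity $\delta_v - d_v$ is well-defined, and the ordering of the $\delta_v - d_v$ values determines everything: if $v$ is the argmax of $\delta_v - d_v$ then $\delta_v - d_v \ge 0$ (in the worst case it equals $\delta_y - 0 \ge 0$ from $y$ itself), so $d_v \le \delta_v$, hence $d_v \le R_v$ and $v$'s broadcast does reach $y$; similarly for the second-largest. Thus the top two order statistics of $\{\delta_v - d_v : v \in V(G_t)\}$ coincide with $m_1, m_2$ as defined in the algorithm (with the convention $m_2 = 0$ handled by $y$'s own contribution $\delta_y - 0$, or by noting that if the second-largest is negative then $m_2 = 0 < m_1 - 1$ unless $m_1 \le 1$, a case one checks separately). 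Therefore, by Lemma~\ref{lem:prob} applied to the values $d_v$ and the i.i.d.\ $\ex(\beta)$ variables $\delta_v$,
\[
\Pr[\,y \text{ does not join } W_t \mid y \in V(G_t)\,] \;=\; \Pr[m_1 - m_2 \le 1] \;\le\; 1 - e^{-\beta}.
\]
Since $\beta = \ln(cn)/k$, we have $e^{-\beta} = (cn)^{-1/k}$, so $\Pr[y \text{ joins } W_t \mid y \in V(G_t)] \ge (cn)^{-1/k}$, and hence $\Pr[y \in V(G_{t+1}) \mid y \in V(G_t)] \le 1 - (cn)^{-1/k}$.

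\medskip

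\textbf{Finishing the induction.} Combining the per-phase bound across phases,
\[
\Pr[y \in V(G_{t'+1})] \;=\; \prod_{t=1}^{t'} \Pr[\,y \in V(G_{t+1}) \mid y \in V(G_t)\,] \;\le\; \bigl(1 - (cn)^{-1/k}\bigr)^{t'},
\]
which is the claimed bound. One technical point worth being careful about: the conditioning should be on the full history (the radii chosen in all prior phases, hence the graph $G_t$), but since the radii in phase $t$ are drawn fresh and independently, the bound $1 - (cn)^{-1/k}$ holds conditionally on \emph{any} realization of $G_t$ with $y \in V(G_t)$, which is exactly what is needed to multiply the conditional probabilities.

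\medskip

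\textbf{Expected main obstacle.} The delicate part is the reduction from "the random set of vertices whose broadcast reaches $y$" to "all vertices in $G_t$", i.e., verifying that restricting to the broadcast-reachable set does not change the top-two order statistics of the shifted variables. This needs the observation (essentially \obref{ob:ser}) that the maximizer and second-maximizer of $\delta_v - d_v$ automatically satisfy $d_v \le \lfloor \delta_v \rfloor$, together with careful handling of the edge case $s = 1$ where the convention $m_2 = 0$ is invoked. Once that bookkeeping is in place, Lemma~\ref{lem:prob} does the real work and the rest is routine.
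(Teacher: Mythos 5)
Your proposal is correct and takes essentially the same route as the paper: fix the realized graph $G_t$, apply \lemmaref{lem:prob} to the shifted exponentials $r_v - d_{G_t}(y,v)$ to get a per-phase joining probability of at least $e^{-\beta}=(cn)^{-1/k}$, and chain the conditional probabilities over phases. The edge case you flag (the random set of broadcasts reaching $y$ and the $m_2=0$ convention) is the same point the paper dispatches with its parenthetical remark, so the two arguments coincide.
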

\begin{proof}
Fix any $1\le t\le t'$, and any possible graph $G_t$ such that $y\in V(G_t)$. Let $v_1,\dots,v_q$ be the vertices of $G_t$ that are in the same connected component of $G_t$ with $y$. Let $d_j=d_{G_t}(v_j,y)$, and $\delta_j=r_{v_j}$ (where each $r_{v_j}$ is sampled independently from $\ex(\beta)$). Recall that $y\in W_t$ iff the maximum value among $\delta_j-d_j$ is larger than the second largest by more than 1 (additively). Applying \lemmaref{lem:prob}, we conclude that the probability a vertex $y\in V(G_t)$ joins $W_t$ is at least $e^{-\beta}=(cn)^{-1/k}$ (this holds even in the event that no other broadcast reached $y$, by definition of $\ex(\beta)$). Since this bound holds regardless of the outcome of previous phases,
\[
\Pr[y\in G_{t'+1}]=\Pr\left[\bigcap_{t=1}^{t'}\{y\notin W_t\}\right]=\prod_{t=1}^{t'}\Pr[y\notin W_t\mid y\notin W_1,\dots,y\notin W_{t-1}]\le (1-(cn)^{-1/k})^{t'}
\]
\end{proof}
\begin{corollary}
With probability at least $1-1/c$, $G_{\lambda+1}$ is empty.
\end{corollary}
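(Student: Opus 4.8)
The plan is to instantiate \claimref{claim:ref} at the last phase and then take a union bound over the $n$ vertices. First I would fix an arbitrary $y\in V$ and apply \claimref{claim:ref} with $t'=\lambda$, which gives
\[
\Pr[y\in V(G_{\lambda+1})]\le \left(1-(cn)^{-1/k}\right)^{\lambda}~.
\]
(If one is uneasy about $\lambda=(cn)^{1/k}\ln(cn)$ not being an integer, note that the algorithm runs at least $\lambda$ phases and the bound in \claimref{claim:ref} is monotonically decreasing in $t'$, so instantiating the claim at the actual, integral number of phases only strengthens the estimate.)

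Next I would invoke the elementary inequality $1-x\le e^{-x}$, so that
\[
\left(1-(cn)^{-1/k}\right)^{\lambda}\le \exp\!\left(-(cn)^{-1/k}\cdot\lambda\right)=\exp\bigl(-\ln(cn)\bigr)=\frac{1}{cn}~,
\]
where the middle equality is just the substitution $\lambda=(cn)^{1/k}\cdot\ln(cn)$, which makes the exponent collapse. Finally, summing this per-vertex bound over all $n$ vertices via the union bound yields
\[
\Pr\bigl[G_{\lambda+1}\text{ is nonempty}\bigr]=\Pr\!\left[\bigcup_{y\in V}\{y\in V(G_{\lambda+1})\}\right]\le n\cdot\frac{1}{cn}=\frac{1}{c}~,
\]
and the claim follows by taking complements.

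The argument is essentially routine: all the real work has already been absorbed into \lemmaref{lem:prob} and \claimref{claim:ref}. The one point worth stating explicitly is that \claimref{claim:ref} holds \emph{regardless of the outcome of previous phases} and for \emph{every} vertex separately, so no independence between vertices is needed here --- mere subadditivity of probability suffices for the union bound. Combined with \lemmaref{lem:radius} (failure probability $\le 2/c$), this corollary (failure probability $\le 1/c$) accounts for the overall $3/c$ failure probability claimed in \theoremref{thm:main}.
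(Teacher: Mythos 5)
Your proposal is correct and follows exactly the paper's own argument: apply \claimref{claim:ref} with $t'=\lambda$, bound $(1-(cn)^{-1/k})^\lambda\le 1/(cn)$ via $1-x\le e^{-x}$, and union-bound over the $n$ vertices. The extra remarks on integrality of $\lambda$ and on not needing independence across vertices are fine but inessential.
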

\begin{proof}
Using \claimref{claim:ref} with $t'=\lambda=(cn)^{1/k}\cdot\ln (cn)$, we see that the probability a vertex $y$ did not join any block is at most $(1-(cn)^{-1/k})^\lambda\le 1/(cn)$. Applying the union bound on the $n$ vertices, we get that with probability $1-1/c$, within $\lambda$ phases the graph is indeed exhausted.
\end{proof}

We are now ready to prove \lemmaref{lem:radius}.

\begin{proof}[Proof of \lemmaref{lem:radius}]
Fix any $v\in V$.
Since each $r_v$ is sampled independently from $\ex(\beta)$, we have for any $1\le t\le \lambda$, $\Pr[r_v^{(t)}\ge k+1]=e^{-\beta(k+1)}$. By using \claimref{claim:ref} with $t'=i\cdot (cn)^{1/k}$ (for some $0\le i\le \ln(cn)$), we obtain $\Pr[v\in G_{t'+1}]\le e^{-i}$. Now,
\begin{eqnarray*}
\Pr[{\cal E}_v]&\le& \sum_{t=1}^\lambda\Pr[r_v^{(t)}\ge k+1\mid v\in G_t]\cdot\Pr[v\in G_t]\\
&\le&\sum_{i=0}^{\ln(cn)}\sum_{t=1}^{(cn)^{1/k}}\Pr[r_v^{(i\cdot(cn)^{1/k}+t)}\ge k+1\mid v\in G_{i\cdot(cn)^{1/k}+t}]\cdot\Pr[v\in G_{i\cdot(cn)^{1/k}+1}]\\
&\le&\sum_{i=0}^{\ln(cn)}e^{-i}\cdot\sum_{t=1}^{(cn)^{1/k}} e^{-\beta(k+1)}\\
&\le&\sum_{i=0}^{\ln(cn)}e^{-i}\cdot(cn)^{1/k}\cdot (cn)^{-1-1/k}\\
&\le&2/(cn)~.
\end{eqnarray*}
The lemma follows from a union bound over the $n$ vertices.
\end{proof}

We conclude by analyzing the running time and messages size. Note that there are $\lambda=(cn)^{1/k}\cdot\ln(cn)$ phases, and each phase requires $k$ rounds (assuming \lemmaref{lem:radius}), so the total number of rounds is as promised. We claim that our algorithm can in fact be implemented efficiently also in the CONGEST model, where messages must be of size at most $O(\log n)$ bits. This follows since at every round, every vertex can sort the values $m_i$ it has so far, and send to its neighbors only the top two from its list. This is because the values $\lfloor m_i\rfloor$ determine the remaining range the message of $v_i$ needs to be forwarded to, and clustering decisions are based only on the largest two values, so the third and onward values in $v$'s list will not be used by any other vertex.

\subsection{Improved Number of Blocks}\label{sec:less-color}

Here we show how to improve the bound on the number of colors to $O(k\cdot n^{1/k})$, and prove the following.
\begin{theorem}\label{thm:less blocks}
For any unweighted graph $G=(V,E)$ on $n$ vertices, and parameters $1\le k\le \ln n$, $5<c$, our randomized distributed algorithm computes, with probability at least $1-5/c$, a strong $(2k-2,4k(cn)^{1/k})$ network decomposition of $G$. The number of rounds required is $O(k^2(cn)^{1/k})$, and each message consists of $O(1)$ words.
\end{theorem}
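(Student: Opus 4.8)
The plan is to re-run essentially the same carving procedure as in Theorem~\ref{thm:main}, but to reduce the number of phases needed to exhaust the graph from $\lambda = (cn)^{1/k}\ln(cn)$ down to $O(k(cn)^{1/k})$ by amplifying the per-phase success probability. Recall that in the proof of Claim~\ref{claim:ref} each vertex $y$ joined $W_t$ with probability only $e^{-\beta} = (cn)^{-1/k}$, which is why $\ln(cn)$ rounds of $(cn)^{1/k}$ phases were needed. The idea is to run, within each ``super-phase,'' several independent clustering attempts (say $\Theta(\ln(cn))$ of them, or equivalently to use a larger value of $\beta$ together with a more refined counting argument) so that after $O(k)$ super-phases, each costing $O(k(cn)^{1/k})$ rounds, every vertex has been clustered with high probability. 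Concretely, I would set $\beta = \ln(cn)/k$ as before but observe that over a block of $\Theta(k)$ consecutive phases the probability that a fixed $y$ survives all of them is at most $(1-(cn)^{-1/k})^{\Theta(k(cn)^{1/k})}$, which one can drive below $1/(cn)$ after only $O(k(cn)^{1/k})$ total phases; so the number of blocks is $O(k(cn)^{1/k})$, and tuning constants gives the stated $4k(cn)^{1/k}$.

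The steps, in order, would be: (1) State the modified construction — it is identical to the one in Section~\ref{sec:main}, with $\lambda' = 4k(cn)^{1/k}$ phases instead of $\lambda$; every vertex in $G_t$ samples $r_v^{(t)} \sim \ex(\beta)$, broadcasts to radius $R_v = \lfloor r_v\rfloor$, and joins $W_t$ iff $m_1 - m_2 > 1$. (2) Observe that Observation~\ref{ob:ser}, Claim~\ref{claim:strong}, and Lemma~\ref{lem:diam} are statements about a single phase and make no reference to the number of phases, so they carry over verbatim; hence each block still has strong diameter at most $2k-2$, conditioned on the event of Lemma~\ref{lem:radius}. (3) Re-derive the analogue of Lemma~\ref{lem:radius} for the new number of phases: the union-bound computation there sums over $i = 0,\dots,\ln(cn)$ and $t = 1,\dots,(cn)^{1/k}$, giving $\Pr[\mathcal{E}_v] \le 2/(cn)$; with $\lambda' = 4k(cn)^{1/k}$ phases the outer sum runs over $i = 0, \dots, O(k)$ super-blocks of length $(cn)^{1/k}$ each, and since $\Pr[r_v^{(t)} \ge k+1] = e^{-\beta(k+1)} = (cn)^{-1-1/k}$ the same telescoping yields $\Pr[\mathcal{E}_v] = O(k/(cn))$, so a union bound over the $n$ vertices and absorbing the $O(k)$ factor into the slack in $c$ (hence the requirement $5 < c$) again gives probability at least $1 - O(1/c)$. (4) Re-derive the analogue of the Corollary: by Claim~\ref{claim:ref} with $t' = \lambda'$, $\Pr[y \in G_{\lambda'+1}] \le (1 - (cn)^{-1/k})^{4k(cn)^{1/k}} \le e^{-4k} \le 1/(cn)$ for $k\ge 1$, so a union bound shows $G_{\lambda'+1}$ is empty with probability $1 - 1/c$. (5) Collect the failure probabilities — Lemma~\ref{lem:radius}-type event, exhaustion event, and the Lemma~\ref{lem:prob}-based event inside Claim~\ref{claim:ref} — and conclude the total success probability is at least $1 - 5/c$. (6) Finally, count: $\lambda' = O(k(cn)^{1/k})$ blocks, each colored with one color, so $\chi = 4k(cn)^{1/k}$; and each of the $\lambda'$ phases takes $k$ rounds, for $O(k^2(cn)^{1/k})$ rounds total, with $O(1)$-word messages by the same CONGEST argument as before.

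The main obstacle I anticipate is \emph{correctly re-accounting the radius-boundedness probability} in step~(3). The original proof of Lemma~\ref{lem:radius} was delicately tuned so that the sum over $\ln(cn)$ geometrically-decaying terms stayed at $2/(cn)$; when we instead have $\Theta(k)$ super-blocks over which $\Pr[v \in G_t]$ does \emph{not} decay geometrically (the survival probability only drops by a constant factor per super-block of $(cn)^{1/k}$ phases, not per phase), the naive bound degrades by a factor of $k$, and one must check that this is harmless — either because $k \le \ln n$ so $k/(cn)$ is still tiny after the union bound (absorbing $k$ into the constant in front of $1/c$), or by choosing the super-block length slightly larger than $(cn)^{1/k}$ so that the per-super-block survival probability is a genuine constant $< 1$ and the sum telescopes geometrically again. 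One has to be careful that whichever choice is made, Claim~\ref{claim:strong} and Lemma~\ref{lem:diam} — which only need $R_v \le k$, i.e. the event of Lemma~\ref{lem:radius} — are still invoked correctly, and that the clean-up of constants matches the clean $4k(cn)^{1/k}$ and $5 < c$ stated in the theorem. Everything else is a faithful transcription of the Section~\ref{sec:main} argument with $\lambda$ replaced by $4k(cn)^{1/k}$.
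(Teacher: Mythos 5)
There is a genuine gap, and it sits exactly at the heart of the theorem. You keep $\beta=\ln(cn)/k$ fixed and run $\lambda'=4k(cn)^{1/k}$ phases, and in your step (4) you claim $\Pr[y\in G_{\lambda'+1}]\le(1-(cn)^{-1/k})^{4k(cn)^{1/k}}\le e^{-4k}\le 1/(cn)$ ``for $k\ge 1$.'' The last inequality requires $4k\ge\ln(cn)$, which fails precisely in the regime $k\ll\ln n$ where Theorem~\ref{thm:less blocks} actually improves on Theorem~\ref{thm:main}. For constant $k$, a fixed vertex survives all $4k(cn)^{1/k}$ phases with constant probability $e^{-\Theta(k)}$, so a constant fraction of the graph remains unclustered; no union bound can rescue this. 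With a fixed $\beta$ the per-phase clustering probability is exactly $e^{-\beta}=(cn)^{-1/k}$, so one genuinely needs $\Theta((cn)^{1/k}\ln(cn))$ phases to exhaust the graph --- which is Theorem~\ref{thm:main}, not Theorem~\ref{thm:less blocks}. Your fallback of lengthening the super-blocks until the per-super-block survival probability is a constant leads back to $\Theta(\ln(cn))$ super-blocks of length $\Theta((cn)^{1/k})$, i.e., the same $\lambda$ as before. Also note that ``running several independent clustering attempts per super-phase'' cannot help: each attempt carves a block, and the number of colors \emph{is} the number of blocks, so amplification by repetition is free in probability but not in colors.

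The missing idea is that the paper \emph{decreases $\beta$ across stages}. Stage $i$ ($0\le i\le\ln n$) uses $\beta_i=\ln(cn/e^i)/k$ for $s_i=2(cn/e^i)^{1/k}$ phases. As $i$ grows the per-phase clustering probability $e^{-\beta_i}=(e^i/(cn))^{1/k}$ increases, so each stage multiplies the survival probability by $e^{-2}$ while the total number of phases telescopes to $\sum_i s_i\le 4k(cn)^{1/k}$. The price is that $\Pr[r_v\ge k+1]=e^{-\beta_i(k+1)}=(e^i/(cn))^{1+1/k}$ also grows with $i$ (it becomes a constant at $i=\ln n$), but this is offset in the analogue of Lemma~\ref{lem:radius} by the survival bound $\Pr[v\in G_t]\le e^{-2i}$, giving $\Pr[{\cal E}_v]\le 4/(cn)$. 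Your steps (1), (2), (6) --- that the single-phase lemmas (Observation~\ref{ob:ser}, Claim~\ref{claim:strong}, Lemma~\ref{lem:diam}) carry over, and the round/message accounting --- are fine, but the two probabilistic re-derivations in (3) and (4) must be done with the stage-dependent $\beta_i$, not with a fixed $\beta$.
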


The main difference from the previous construction of is that the parameter $\beta$ of the exponential distribution will change at certain points. There will be $\ln n$ stages, each stage consists of a certain number of phases in which we use the same value of $\beta$. The first stage lasts $s_0=2(cn)^{1/k}$ phases in which we use $\beta_0=\ln (cn)/k$. The next stage lasts $s_1=2(cn/e)^{1/k}$ phases, in which we use $\beta_1=\ln (cn/e)/k$. In general, the $i$-th stage lasts $s_i=2(cn/e^i)^{1/k}$ phases, and we use $\beta_i=\ln (cn/e^i)/k$ in these phases.
For $0\le i\le \ln n$, denote by $J_i$ the set of phases in the $i$-th stage, that is, $J_i=\{\sum_{j=0}^{i-1}(s_j)+1,\dots,\sum_{j=0}^is_j\}$.

The total number of phases, which bounds the number of colors needed, is thus
\[
\sum_{i=0}^{\ln n}s_i=2\sum_{i=0}^{\ln n}(cn/e^i)^{1/k}\le 2(cn)^{1/k}\sum_{i=0}^\infty e^{-i/k} \le 4k(cn)^{1/k}~.
\]

The strong diameter bound of \lemmaref{lem:diam} holds regardless of which $\beta$ we use, as long as an analogue of \lemmaref{lem:radius} holds. Decreasing the parameter $\beta$ of the exponential distribution increases the probability that a vertex joins a block (so we need less blocks). However, the radius of blocks tend to increase as $\beta$ gets smaller. The following claim implies that the graph is exhausted with high probability.
\begin{claim}
For any vertex $y\in V$, $0\le i\le\ln n$, and $t\in J_i$,
\begin{equation}\label{eq:pr}
\Pr[y\in G_t]\le e^{-2i}~.
\end{equation}
\end{claim}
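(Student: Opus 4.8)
The plan is to follow the structure of the proof of \claimref{claim:ref}, but to keep track of the value of $\beta$ changing from stage to stage. The single fact we need, phase by phase, is that if $\tau$ is a phase in stage $j$ (so the parameter in use is $\beta_j=\ln(cn/e^j)/k$), then regardless of how phases $1,\dots,\tau-1$ turned out,
\[
\Pr[y\in W_\tau \mid y\in V(G_\tau)\text{ and the history up to phase }\tau]\ \ge\ e^{-\beta_j}\ =\ (cn/e^j)^{-1/k}~.
\]
This is obtained exactly as in \claimref{claim:ref}: once the earlier carvings are fixed, the graph $G_\tau$, the connected component of $y$ in it, and the distances $d_\ell=d_{G_\tau}(v_\ell,y)$ to the other vertices $v_1,\dots,v_q$ of that component are all determined; the fresh radii $r_{v_\ell}^{(\tau)}$ are i.i.d.\ $\ex(\beta_j)$ and independent of everything so far; and \lemmaref{lem:prob}, applied with these $d_\ell$'s and $\delta_\ell=r_{v_\ell}^{(\tau)}$, says that the probability that the top two values of $\delta_\ell-d_\ell$ fail to differ by more than $1$ (equivalently, that $y$ joins $W_\tau$) is at least $e^{-\beta_j}$, with the degenerate one-vertex-component case covered by the convention $m_2=0$ and $\Pr[r_y^{(\tau)}>1]=e^{-\beta_j}$.

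Now fix $i$ and a phase $t\in J_i$. Since $G_{\tau+1}=G_\tau\setminus W_\tau$, the graphs only shrink, so $\Pr[y\in G_t]\le\Pr[y\in G_{\tau_i}]$, where $\tau_i=1+\sum_{j=0}^{i-1}s_j$ is the first phase of stage $i$; and $y\in G_{\tau_i}$ holds precisely when $y\notin W_\tau$ for every phase $\tau\in J_0\cup\cdots\cup J_{i-1}$. Conditioning one phase at a time and plugging in the per-phase bound above, the probability that $y$ survives all $s_j$ phases of stage $j$ is at most
\[
\left(1-(cn/e^j)^{-1/k}\right)^{s_j}\ \le\ \exp\!\left(-s_j\cdot(cn/e^j)^{-1/k}\right)\ =\ e^{-2}~,
\]
using $1-x\le e^{-x}$ and the choice $s_j=2(cn/e^j)^{1/k}$. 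Multiplying these bounds over the disjoint stages $j=0,1,\dots,i-1$ yields $\Pr[y\in G_t]\le\prod_{j=0}^{i-1}e^{-2}=e^{-2i}$, which is \eqref{eq:pr}. (For $i=0$ the product is empty and the bound is the trivial $1$.)

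There is no substantial obstacle here; the only point that needs care is the conditioning, namely that the lower bound $e^{-\beta_j}$ on the chance of joining a block in a given phase is uniform over all possible histories. This is exactly why \lemmaref{lem:prob} is stated for \emph{arbitrary} $d_1\le\cdots\le d_q$: whatever $G_\tau$ emerges from the earlier carvings, the relevant distances are then just some fixed numbers, and the lemma applies verbatim to the independent fresh samples. The geometric factor $e^{-2}$ gained per stage is precisely engineered by setting $s_j=2(cn/e^j)^{1/k}$, so that the decreased success probability $e^{-\beta_j}$ in later stages is compensated by a correspondingly larger number of phases.
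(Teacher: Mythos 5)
Your proof is correct and follows essentially the same route as the paper's: bound the per-phase joining probability by $e^{-\beta_j}$ via \lemmaref{lem:prob} (uniformly over histories), then multiply the stage-wise survival bounds $(1-(cn/e^j)^{-1/k})^{s_j}\le e^{-2}$ over stages $0,\dots,i-1$. The extra care you take with the conditioning and the degenerate one-vertex case is exactly what the paper's terser argument implicitly relies on.
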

\begin{proof}
In order to be included in $G_t$, $y$ must not be selected to a block in any phase of any of the stages $0,1,\dots,i-1$.
By \lemmaref{lem:prob}, the probability that $y$ did not join a block in a certain phase of stage $j$ is at most $(1-e^{-\beta_j})$ (even conditioning on anything that happened in previous phases), thus the probability it survived until stage $i$ is at most
\[
\prod_{j=0}^{i-1}(1-e^{-\beta_j})^{s_j}=\prod_{j=0}^{i-1}\left(1-\left(\frac{cn}{e^j}\right)^{-1/k}\right)^{2(cn/e^j)^{1/k}}\le \prod_{j=0}^{i-1}e^{-2}=e^{-2i}~.
\]
\end{proof}

The claim implies (by the union bound), that with probability at least $1-1/n$, there are no remaining vertices after stage $\ln n$. It remains to prove an analogue of \lemmaref{lem:radius}, and argue that with probability at least $1-4/c$, none of the events ${\cal E}_v$ took place. We calculate,
\begin{eqnarray*}
\Pr[{\cal E}_v]&\le&\sum_{i=0}^{\ln n}\sum_{t\in J_i}\Pr[r_v^{(t)}\ge k+1\mid v\in G_t]\cdot\Pr[v\in G_t]\\
&\le&\sum_{i=0}^{\ln n}\sum_{t\in J_i}e^{-\beta_i(k+1)}\cdot e^{-2i}\\
&=&\sum_{i=0}^{\ln n}2\left(\frac{cn}{e^i}\right)^{1/k}\cdot\left(\frac{e^i}{cn}\right)^{1+1/k}\cdot e^{-2i}\\
&=&\frac{1}{n}\sum_{i=0}^{\ln n}\frac{2}{c\cdot e^i}\\
&\le&\frac{4}{cn}~.
\end{eqnarray*}
So by the union bound, with probability at least $1-4/c$, none of events ${\cal E}_v$ occurred, as desired.

\subsection{High Radius Regime}
Note that in \theoremref{thm:main} and \theoremref{thm:less blocks} the number of blocks is $\Omega(\log n)$ for any choice of $k$.
In the regime that $k$, the parameter governing the radius, is larger than $\ln n$, we can get fewer than $\ln n$ blocks.
Concretely, by \claimref{claim:ref} we have that the probability that a vertex $y$ is not in any of the first $\lambda$ blocks is at most $(1-(cn)^{-1/k})^\lambda\le(\ln(cn)/k)^\lambda$ (here we use the estimate $1-e^{-x}\le x$, which is useful when $x$ is small). We would like this probability to be at most $1/cn$, so that the graph will be empty after $\lambda$ phases with probability at least $1-1/c$. To this end, we need
\[
\lambda=\frac{\ln(cn)}{\ln(k/\ln(cn))}~.
\]
In other words, if the number of blocks we want is $\lambda$, then we need to take $k=(cn)^{1/\lambda}\cdot\ln(cn)$, exactly the inverse tradeoff of \theoremref{thm:main}.

\begin{theorem}\label{thm:inverse}
For any unweighted graph $G=(V,E)$ on $n$ vertices, and parameters $1\le \lambda\le \ln n$, $c>3$, our randomized distributed algorithm computes, with probability at least $1-3/c$, a strong $(2(cn)^{1/\lambda}\cdot\ln(cn),\lambda)$ network decomposition of $G$. The number of rounds required is $\lambda(cn)^{1/\lambda}\cdot\ln (cn)$, and each message consists of $O(1)$ words.
\end{theorem}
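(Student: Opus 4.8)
The plan is to run the algorithm and analysis of \theoremref{thm:main} essentially verbatim, but with the parameter choice $k=(cn)^{1/\lambda}\cdot\ln(cn)$ identified in the discussion above, and executing the process for exactly $\lambda$ phases rather than $(cn)^{1/k}\ln(cn)$ of them; equivalently, every vertex samples $r_v$ from $\ex(\beta)$ with $\beta=\ln(cn)/k=(cn)^{-1/\lambda}$ and broadcasts to radius $R_v=\lfloor r_v\rfloor$ as before. First I would observe that \obref{ob:ser}, \claimref{claim:strong} and \lemmaref{lem:diam} are completely insensitive to the number of phases and to the value of $\beta$: they use only that $R_v\le k$ for every vertex at every phase. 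Hence, provided the analogue of \lemmaref{lem:radius} holds (with the new $k$ in place of the old one), every block has strong diameter at most $2k-2=2(cn)^{1/\lambda}\ln(cn)-2$, and once the graph is exhausted the $\lambda$ blocks $W_1,\dots,W_\lambda$ partition $V$, so their connected components give a proper $\lambda$-coloring of $\cG(\cP)$. The task therefore reduces to three things: exhaustion within $\lambda$ phases, re-establishing the radius analogue, and bounding the running time.

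For exhaustion I would quote \claimref{claim:ref} (whose proof does not depend on the total number of phases): it gives $\Pr[y\in G_{\lambda+1}]\le(1-(cn)^{-1/k})^\lambda\le(\ln(cn)/k)^\lambda=((cn)^{-1/\lambda})^\lambda=1/(cn)$ using $1-e^{-x}\le x$, and a union bound over the $n$ vertices makes $G_{\lambda+1}$ empty with probability at least $1-1/c$. This is the computation already sketched in the paragraph preceding the theorem.

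Next I would redo the calculation from the proof of \lemmaref{lem:radius} with the new parameters. Now $\Pr[r_v^{(t)}\ge k+1]=e^{-\beta(k+1)}=(cn)^{-1-1/k}$, while $\Pr[v\in G_t]\le(1-(cn)^{-1/k})^{t-1}\le(cn)^{-(t-1)/\lambda}$ by \claimref{claim:ref}. Summing,
\[
\Pr[{\cal E}_v]\ \le\ \sum_{t=1}^{\lambda}\Pr[r_v^{(t)}\ge k+1\mid v\in G_t]\cdot\Pr[v\in G_t]\ \le\ (cn)^{-1}\sum_{j=0}^{\lambda-1}(cn)^{-j/\lambda}\ \le\ \frac{(cn)^{-1}}{1-(cn)^{-1/\lambda}}\ \le\ \frac{2}{cn}\,,
\]
where the last step uses $\lambda\le\ln n<\ln(cn)$, so that $(cn)^{1/\lambda}\ge e$ and $1-(cn)^{-1/\lambda}\ge 1-1/e>1/2$. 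A union bound over the $n$ vertices then yields the analogue of \lemmaref{lem:radius} with probability at least $1-2/c$.

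Putting the pieces together, with probability at least $1-3/c$ both the radius analogue and exhaustion hold, producing a strong $(2(cn)^{1/\lambda}\ln(cn),\lambda)$ network decomposition. For the running time, there are $\lambda$ phases and each takes $\max_v R_v\le k$ rounds, for a total of $\lambda\cdot(cn)^{1/\lambda}\ln(cn)$; the argument that $O(1)$-word messages suffice (each vertex forwards only its top two shifted values) transfers unchanged from \theoremref{thm:main}. The only genuinely new ingredient is the short geometric-series estimate in the radius bound, and the one point to watch there is that the hypothesis $\lambda\le\ln n$ is exactly what keeps $1-(cn)^{-1/\lambda}$ bounded away from $0$, hence the series summable; everything else is a direct reuse of the already-proved lemmas, so I do not anticipate a substantive obstacle.
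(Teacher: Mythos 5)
Your proposal is correct and follows exactly the route the paper intends: rerun the algorithm of \theoremref{thm:main} with $k=(cn)^{1/\lambda}\ln(cn)$, reuse \obref{ob:ser}, \claimref{claim:strong}, \lemmaref{lem:diam} and \claimref{claim:ref} unchanged, and verify exhaustion in $\lambda$ phases via the $1-e^{-x}\le x$ estimate sketched in the paragraph preceding the theorem. Your redone geometric-series bound for the analogue of \lemmaref{lem:radius} (where $\lambda\le\ln n$ guarantees $(cn)^{1/\lambda}\ge e$, keeping the series summable) is a detail the paper leaves implicit, and it checks out.
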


\section{Acknowledgement}
We are grateful to Nati Linial for discussions that initiated this work.

\bibliographystyle{alpha}
\bibliography{net_decomp}

\end{document}